\setlist[itemize]{leftmargin=*}
\DeclareMathOperator*{\minimize}{minimize}
\DeclareMathOperator{\subjectto}{subject\ to}
\DeclareMathAlphabet\mathbfcal{OMS}{cmsy}{b}{n}
\newtheorem{theorem}{Theorem}
\newtheorem{mydef}{Definition}
\newtheorem{asmp}{Assumption}
\newtheorem{myprs}{Proposition}
\newtheorem{exmpl}{Example}
\newcommand{\mat}[1]{\boldsymbol{#1}}
\newcommand{\bmat}[1]{\begin{bmatrix} #1 \end{bmatrix}}
\providecommand{\mA}{\ensuremath{\mat{A}}}
\providecommand{\mB}{\ensuremath{\mat{B}}}
\providecommand{\mC}{\ensuremath{\mat{C}}}
\providecommand{\mF}{\ensuremath{\mat{F}}}
\providecommand{\mI}{\ensuremath{\mat{I}}}
\providecommand{\mM}{\ensuremath{\mat{M}}}
\providecommand{\mN}{\ensuremath{\mat{N}}}
\providecommand{\mO}{\ensuremath{\mat{O}}}
\providecommand{\mP}{\ensuremath{\mat{P}}}
\providecommand{\mX}{\ensuremath{\mat{X}}}
\providecommand{\mY}{\ensuremath{\mat{Y}}}
\newcommand{\m}{\boldsymbol}
\newcommand{\mc}[1]{\mathcal{#1}}
\title{\vspace{0.8cm}\LARGE \bf Simultaneous Sensor and Actuator Selection/Placement through Output Feedback Control}
\author{Sebastian Nugroh$\text{o}^*$, Ahmad F. Tah$\text{a}^*$, Tyler Summer$\text{s}^{\dagger}$, Nikolaos Gatsi$\text{s}^*$
	\thanks{
		*Department of Electrical and Computer Engineering, The University of Texas at San Antonio, 1 UTSA Circle, San Antonio, TX 78249.
		$^\dagger$Department of Mechanical Engineering, The University of Texas at Dallas, 800 W Campbell Rd, Richardson, TX 75080.
		Emails: sebastian.nugroho@my.utsa.edu, \{ahmad.taha, nikolaos.gatsis\}@utsa.edu, tyler.summers@utdallas.edu. This material is based upon work supported by the National Science Foundation under Grants CMMI 1728629 and CMMI 1728605.}
}
\begin{document}

\maketitle
\thispagestyle{empty}
\pagestyle{empty}

\begin{abstract}
In most dynamic networks, it is impractical to measure all of the system states; instead, only a subset of the states are measured through sensors. Consequently, and unlike full state feedback controllers, output feedback control utilizes only the measured states to obtain a stable closed-loop performance. This paper explores the interplay between the selection of minimal number of sensors and actuators (SaA) that yield a stable closed-loop system performance. Through the formulation of the static output feedback control problem, we show that the simultaneous selection of minimal set of SaA is a combinatorial optimization problem with mixed-integer nonlinear matrix inequality constraints. To address the computational complexity, we develop two approaches: The first approach relies on integer/disjunctive programming principles, while the second approach is a simple algorithm that is akin to binary search routines. The optimality of the two approaches is also discussed. Numerical experiments are included showing the performance of the developed approaches.
\end{abstract}
\begin{IEEEkeywords}
	Sensor and actuator selection and placement, static output feedback control, mixed-integer nonlinear matrix inequality, disjunctive programming, binary search algorithm.
\end{IEEEkeywords}

\vspace{-0.3cm}
\section{Introduction}
The interplay between the selection of minimal number of sensors and actuators (SaA) in dynamic systems is investigated in this paper. In general, the SaA selection or placement problem can be described as finding the optimal binary, on/off configuration of SaA that satisfy certain dynamic system metrics such as closed-loop system stability, output-feedback stability, linear quadratic regulator and robust $\mc{H}_{2}$/$\mc{H}_{\infty}$ control/estimation metrics. This problem has potential applications in areas such as: large scale power systems \cite{Taylor2017,pequito2013framework}, power systems integration with microgrids \cite{Bansal2016}, municipal water networks \cite{berry2005sensor}, and transportation systems \cite{TubaishatWCM616,Contreras2016}.

Various studies investigate the problem of selecting sensors or actuators separately, while invoking the separation principle that decouples the problems of designing controllers and state estimators, while assuming classical state feedback controller. A more interesting problem is that of simultaneously selecting SaA in the context of output feedback control, where the control law is obtained explicitly from the output measurements, rather than the states of the network. Even when the separation principle is conveniently exploited, the SaA selection problems are inherently coupled. 

Three major approaches have been developed in the recent literature of SaA selection. The first approach is based on combinatorial algorithms, heuristics, and detailed algorithms that often exploit network structure and properties~\cite{tzoumas2016,zhang2017sensor,nepusz2012controlling,ruths2014control,olshevsky2014minimal,pequito2016,Summers2016,summers2016actuator,Haber2017}. The second approach entails utilizing semidefinite programming (SDP) formulations of control/estimation methods while including sparsity promoting penalties on the gain matrix---thereby minimizing the total number of activated SaA~\cite{Polyak-LMI_sparse_fb,Dhingra2014,Argha2016,Munz2014}. The third approach uses a combination of mixed-integer convex programming, convex relaxations and approximations to obtain the minimal set of SaA~\cite{Taylor2017,Chanekar2017,Taha2017d}. In particular, the problem of simultaneously selecting/placing SaA with dynamic output feedback control is studied in~\cite{de2000linear,Argha2016}. 
%, where the authors formulate a mixed-integer SDPs (MI-SDPs) and use coordinate descent algorithms to obtain suboptimal solutions. 
In this paper, we investigate the problem of simultaneously selecting SaA through static output feedback control framework, where the objective is to stabilize the closed-loop system through the least number of SaA given logistic constraints on the selection of SaA. Two different approaches to solve this problem are proposed. 

The paper organization are as follows. First, we discuss the needed assumptions, definitions, and the formulation of the classical static output feedback problem through an SDP---all in Section~\ref{sec:PbmForm}. The problem formulation is presented in Section~\ref{sec:pbmform}, where we show that the simultaneous SaA selection requires solving a nonconvex optimization problem with mixed-integer nonlinear matrix inequality (MI-NMI) constraints. Section~\ref{sec:Big-M} presents the first approach, whereby the problem is transformed to MI-SDP by using disjunctive programming principles~\cite{nemhauser1988integer,grossmann2002review}. 
Section~\ref{sec:BSAlgo} presents a departure from the mixed-integer formulations to an algorithm that is akin to binary search routines. The developed algorithm leverages the SaA problem structure and the suboptimality or infeasibility of specific SaA combinations. 
%In particular, if a combination of SaA selection is infeasible, then all combinations which have less number of active SaA and the active SaA are included in the infeasible one are also infeasible, and if a combination is feasible, then all other combinations with more activated SaA is suboptimal---and hence discarded. 
We prove that both approaches yield optimal solutions to the formulated nonconvex problem. Numerical tests are provided in Section~\ref{sec:results}. 

Some of the mathematical proofs are omitted in this version of the paper, but will be included in an extended version of this work. 
%\vspace{-0.2cm}
\section{Static Output Feedback Control Review and Problem Formulation}~\label{sec:PbmForm}

\vspace{-0.4cm}
In this section, we present some necessary background including the definition of static output feedback stabilizability and the SDP that solves for the output feedback gain given a fixed SaA combination.

\vspace{-0.3cm}
\subsection{Notation}
%\vspace{-0.1cm}
%Let $\mathbb{R}$, $\mathbb{R}_{+}$, and $\mathbb{R}_{++}$ denote the set of real, nonnegative real, and positive real numbers. The set of nonnegative integers is denoted by $\mathbb{Z}_{+}$. The notations $\mathbb{R}^n$ and $\mathbb{R}^{p\times q}$ denote column vectors with $n$ elements and real-valued matrices with size $p$-by-$q$. 
The set of $n\times n$ symmetric and positive definite matrices are denoted $\mathbb{S}^{n}$ and $\mathbb{S}^{n}_{++}$. For a square matrix $\m X$, the notation $\Lambda(\m X)$ denotes the set of all eigenvalues of $\m X$. The function $\mathrm{Re}(c)$ extracts the real part of a complex number $c$, whereas $\mathrm{blkdiag}(\cdot)$ is used to construct a block diagonal matrix.  For a matrix $\mX\in\mathbb{R}^{p\times q}$, the operator $\mathrm{Vec}(\mX)$ returns a stacked $pq\times 1$ column vector of entries of $\mX$, while $\mathrm{Diag}(\mY)$ returns a $n\times 1$ column vector of diagonal entries of square matrix $\mY\in\mathbb{R}^{n\times n}$. The symbol $\otimes$ denotes the Kronecker product. For any $x\in\mathbb{R}$, $\vert x\vert$ and $\lceil x \rceil$ denote the absolute value and ceiling function of $x$. The cardinality of a set $\mathcal{S}$ is denoted by $\vert\mathcal{S}\vert$, whereas $(0)^n$ denotes a $n$-tuple with zero valued elements.
 
\vspace{-0.3cm}
\subsection{Systems Description}
Consider a linear time invariant (LTI) dynamical system consisting of $N$ nodes, with $\mathcal{N} = \lbrace 1,\ldots,N\rbrace$ defining the set of nodes, modeled in the following state-space equations
\vspace{-0.4cm}
\begin{subequations}~\label{equ:StateSpace-all}
\begin{align}
\m{\dot{x}}(t) &= \m{Ax}(t)+\boldsymbol{Bu}(t)~\label{equ:StateSpace-x}  \\
\m{y}(t) &= \m{Cx}(t),~\label{equ:StateSpace-y} 
\end{align}
\end{subequations}
where the state, input, and output vectors on each node $i \in \mathcal{N}$ are represented by $\m{x}_i(t) \in \mathbb{R}^{n_{x_i}}$, $\m{u}_i(t) \in \mathbb{R}^{n_{u_i}}$, and $\m{y}_i(t) \in \mathbb{R}^{n_{y_i}}$. The global state, input, and output vectors are written as $\m{x}(t) \triangleq [\m{x}_1^{\top}(t),\ldots,\m{x}_N^{\top}(t)]^{\top}$, $\m{u}(t) \triangleq [\m{u}_1^{\top}(t),\ldots,\m{u}_N^{\top}(t)]^{\top}$, and $\m{y}(t) \triangleq [\m{y}_1^{\top}(t),\hdots,\m{y}_N^{\top}(t)]^{\top}$ where $\m{x}(t)\in \mathbb{R}^{n_{x}}$, $\m{u}(t)\in \mathbb{R}^{n_{u}}$, and $\m{y}(t)\in \mathbb{R}^{n_{y}}$. We assume that the SaA on each node $i$ only correspond to that particular node. Therefore, $\m B$ and $\m C$ can be respectively constructed as $\m{B} \triangleq \mathrm{blkdiag}(\m{B}_{1},\m{B}_{2},\ldots,\m{B}_{N})$ and $\m{C} \triangleq \mathrm{blkdiag}(\m{C}_{1},\m{C}_{2},\ldots,\m{C}_{N})$ where $\m B \in \mathbb{R}^{n_x\times n_u} $ and $\m C \in \mathbb{R}^{n_y\times n_x} $. This assumption enforces the coupling among nodes to be represented in the state evolution matrix $\m{A} \in \mathbb{R}^{n_x\times n_x} $, which is realistic in various dynamic networks as control inputs and observations are often determined locally. In addition, we also assume that $\m B$ and $\m C$ are full column rank and full row rank, respectively.

To formalize the SaA selection problem, let $\gamma_i\in \lbrace0,1\rbrace$ and $\pi_i \in \lbrace0,1\rbrace$ be two binary variables that represent the selection of SaA at node $i$ of the dynamic network. We consider that $\gamma_i = 1$ if the sensor of node $i$ is selected (or activated) and $\gamma_i = 0$ otherwise. Similarly, $\pi_i = 1$ if the actuator of node $i$ is selected and $\pi_i = 0$ otherwise. The augmented dynamics can be written as
\vspace{-0.00cm}
\begin{subequations}\label{equ:StateSpaceSaA-all}
\begin{align}
\m{\dot{x}}(t) &= \m{Ax}(t)+\m{B\Pi u}(t) ~\label{equ:StateSpaceSaA-x}\\
\m{y}(t) &= \m{\Gamma Cx}(t), ~\label{equ:StateSpaceSaA-y} 
\end{align}
\end{subequations}
%\vspace{-0.2cm}
where $\m{\Pi}$ and $\m{\Gamma}$ are symmetric block matrices defined as
%\vspace{-0.1cm} 
\begin{subequations}\label{equ:defPiGamma}
\begin{align}
\m{\Pi} &\triangleq \mathrm{blkdiag}(\pi_1\m{I}_{n_{u_{1}}},\pi_2\m{I}_{n_{u_{2}}},\ldots,\pi_N\m{I}_{n_{u_{N}}})\label{equ:defPi}\\
\m{\Gamma}&\triangleq\mathrm{blkdiag}(\gamma_1\m{I}_{n_{y_{1}}},\gamma_2\m{I}_{n_{y_{2}}},\ldots,\gamma_N\m{I}_{n_{y_{N}}}).\label{equ:defGamma}
\end{align} 
\end{subequations}

\vspace{-0.3cm}
\subsection{The Static Output Feedback Stabilizability Problem}

We begin this section by providing the definition of static output feedback stabilizability.

\begin{mydef}~\label{def:OutputFeedbackStabilizability}
The dynamical system \eqref{equ:StateSpace-all} is stabilizable via static output feedback if there exists $\m F \in \mathbb{R}^{n_u\times n_y}$, with control law defined as $\m u(t) = \m F\m y(t)$, such that $\mathrm{Re}(\lambda) < 0$ for every $\lambda \in \Lambda(\m{A}+\m{B}\m{ F C})$.
\end{mydef}

By using the above definition, the static output feedback stabilizability problem can be defined as the problem of finding $\m F$ such that the closed loop system $\m{A}+\m{B}\m{ F C}$ is asymptotically stable. Throughout this paper, we require that dynamical system \eqref{equ:StateSpace-all} satisfies the following assumption.
%with the full set of SaA to satisfy the following assumption. 

\begin{asmp}~\label{asmp:detectandstable}
The following conditions apply to \eqref{equ:StateSpace-all}:
\begin{enumerate}
\item The pair $(\m A, \m B)$ is stabilizable,
\item The pair $(\m A, \m C)$ is detectable.
\end{enumerate}
\end{asmp}

Note that above assumption is not enough to guarantee that \eqref{equ:StateSpace-all} is stabilizable via static output feedback. To proceed, the following proposition provides a sufficient condition for static output feedback stabilizability.
\begin{myprs}~\label{prs:OFS}
The dynamic network \eqref{equ:StateSpace-all} is static output feedback stabilizable if there exist an invertible matrix $\m M \in \mathbb{R}^{n_u\times n_u}$, matrices $\m P \in \mathbb{S}^{n_x}_{++}$, $\m N \in \mathbb{R}^{n_u\times n_y}$, and $\mF \in \mathbb{R}^{n_u\times n_y}$ such that the following linear matrix inequalities are feasible
%\vspace{-0.05cm}
\begin{subequations}\label{eq:OFS}
\begin{align}
&\m{A}^{\top}\m P+\m{PA}+\m{C}^{\top}\m N^{\top}\m B^{\top}+\m{B}\m{N C} 
\prec 0~\label{eq:OFS1} \\
&\m B \m M = \m P \m B, ~\label{eq:OFS2} 
 \end{align}
 \end{subequations}
%\vspace{-0.05cm}
with control law $\m u (t) = \mF \m y(t)$ where $\m{F}=\m{M}^{-1}\m{N}$. 
\end{myprs}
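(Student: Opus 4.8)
The plan is to exhibit a quadratic Lyapunov function for the closed-loop system and show that the two feasibility conditions are exactly what is needed to certify its decrease. Under the control law $\m u(t) = \mF\m y(t) = \mF\m C\m x(t)$, the closed loop \eqref{equ:StateSpace-all} becomes $\m{\dot{x}}(t) = (\m A + \m B\mF\m C)\m x(t)$, so by Lyapunov's theorem it suffices to produce some $\m P \in \mathbb{S}^{n_x}_{++}$ with $(\m A + \m B\mF\m C)^\top\m P + \m P(\m A + \m B\mF\m C) \prec 0$. I would take the matrix $\m P$ supplied by the hypothesis as the Lyapunov certificate and reduce this closed-loop Lyapunov inequality to \eqref{eq:OFS1}.

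First I would expand the closed-loop term as $\m A^\top\m P + \m P\m A + \m C^\top\mF^\top\m B^\top\m P + \m P\m B\mF\m C$, which isolates the two bilinear cross terms $\m P\m B\mF\m C$ and its transpose. The key step is to eliminate the product $\m P\m B$ using the structural constraint \eqref{eq:OFS2}: from $\m B\m M = \m P\m B$ together with $\mF = \m M^{-1}\m N$ one obtains $\m P\m B\mF\m C = \m B\m M\m M^{-1}\m N\m C = \m B\m N\m C$, and taking transposes (with $\m P$ symmetric) gives $\m C^\top\mF^\top\m B^\top\m P = \m C^\top\m N^\top\m B^\top$. Substituting these back collapses the closed-loop Lyapunov expression exactly to $\m A^\top\m P + \m P\m A + \m C^\top\m N^\top\m B^\top + \m B\m N\m C$, which is the left-hand side of \eqref{eq:OFS1} and is therefore negative definite by assumption.

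With $V(\m x) = \m x^\top\m P\m x$ satisfying $\m P\succ 0$ and $\dot V \prec 0$ along trajectories, I would invoke the standard Lyapunov stability theorem to conclude that every $\lambda \in \Lambda(\m A + \m B\mF\m C)$ satisfies $\mathrm{Re}(\lambda) < 0$, i.e., the system is static output feedback stabilizable in the sense of Definition~\ref{def:OutputFeedbackStabilizability}.

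The main obstacle — and the reason the auxiliary variables $\m M$ and $\m N$ appear at all — is the algebraic elimination of the bilinear term $\m P\m B\mF$, which would otherwise make \eqref{eq:OFS1} nonconvex in the unknowns. The constraint $\m B\m M = \m P\m B$ is precisely the device that lets $\m P$ commute past $\m B$ so that the substitution $\m N = \m M\mF$ linearizes the inequality; the only genuine subtlety is checking that this substitution is well posed, namely that $\m M$ is invertible so $\mF = \m M^{-1}\m N$ is well defined, which is guaranteed directly by the hypothesis.
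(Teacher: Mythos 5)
Your proof is correct and follows essentially the same route as the argument the paper relies on (it defers to \cite{Crusius1999}, where the proposition is proved exactly this way): substitute $\m P\m B=\m B\m M$ and $\m N=\m M\mF$ to collapse the closed-loop Lyapunov inequality $(\m A+\m B\mF\m C)^{\top}\m P+\m P(\m A+\m B\mF\m C)\prec 0$ into \eqref{eq:OFS1}, then invoke the standard Lyapunov theorem. No gaps; your remark that the invertibility of $\m M$ is what makes $\mF=\m M^{-1}\m N$ well posed is the only point of care, and you address it.
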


The proof of the above proposition is available in \cite{Crusius1999}. The condition presented in Proposition \ref{prs:OFS} allows the static output feedback stabilization problem to be solved as an LMI. The problem formulation of output feedback stabilizability with simultaneous SaA selection is given next.

%\vspace{-0.25cm}
\section{Problem Formulation}~\label{sec:pbmform}
The simultaneous SaA selection with static output feedback control is the problem of selecting a minimal set of SaA while still maintaining the stability of the system through static output feedback control. Thus, based on Proposition~\ref{prs:OFS}, the SaA selection problem for output feedback stabilization can be formulated as follow.
\begin{subequations}~\label{eq:OFS-SaAProblem}
\begin{align}
\minimize\;\; & \sum_{k=1}^{N}  \pi_k +  \gamma_k ~\label{eq:OFS-SaAProblem-1} \\
\subjectto \;\; &\m{A}^{\top}\m P+\m{PA}+\m{C}^{\top}\m\Gamma\m N^{\top}\m\Pi\m B^{\top} \nonumber \\ 
&+\m{B}\m\Pi\m{N \Gamma C}\prec 0 ~\label{eq:OFS-SaAProblem-2}\\
& \m{B}\m{\Pi M}=\m{P} \m B\m\Pi ~\label{eq:OFS-SaAProblem-3}\\
&\m \Phi \bmat{\m \pi \\ \m \gamma} \leq \m \phi ~\label{eq:OFS-SaAProblem-4}\\
%& \bmat{\m \Phi_\gamma & \m O\\ \m O & \m \Phi_\pi}\bmat{\m \gamma \\ \m \pi} \preceq \bmat{\m \phi_\gamma \\ \m \phi_\pi} ~\label{eq:OFS-SaAProblem-4}\\
& \m{P} \succ 0,\;\m\pi \in \{0,1\}^N,\;\m\gamma \in \{0,1\}^N ~\label{eq:OFS-SaAProblem-5}. 
\end{align}
\end{subequations}

In \eqref{eq:OFS-SaAProblem}, the optimization variables are $\lbrace\m\pi,\m\gamma,\m N,\m M,\m P\rbrace$ with $\mP\in\mathbb{S}^{n_x}$, $\m\pi = [\pi_1,\ldots,\pi_N]^{\top}$, and $\m\gamma = [\gamma_1,\ldots,\gamma_N]^{\top}$.
%; $\m{\Pi}$ and $\m{\Gamma}$ are their matrix representations, as in  \eqref{equ:defPiGamma}.
The additional constraint \eqref{eq:OFS-SaAProblem-4} can be regarded as a linear logistic constraint, which is useful to model preferred activation or deactivation of SaA on particular nodes and to define the desired minimum and maximum number of active SaA. This constraint is also useful in multi-period selection problems where certain actuators and sensors are deactivated due to logistic constraints. 

Upon solving \eqref{eq:OFS-SaAProblem}, the SaA selection is obtained and represented by $\lbrace\m \pi^*,\,\m \gamma^*\rbrace$ with static output feedback gain $\mF$ to be computed as $\m{M}^{-1}\m{N}$, assuming that $\m{M}$ is invertible. Note that, \eqref{eq:OFS-SaAProblem} is nonconvex due to the presence of MI-NMI in the form of $\m\Pi\m{N \Gamma}$ and mixed-integer bilinear matrix equality in \eqref{eq:OFS-SaAProblem-3}. Thus, \iffalse beyond MI-LMIs and MI-BMIs, thus\fi problem \eqref{eq:OFS-SaAProblem} cannot be solved by any general-purpose mixed integer convex programming solver. To that end, two different approaches that solve or approximate \eqref{eq:OFS-SaAProblem} are developed. The first approach is based on disjunctive programming, while the other approach is based on a binary search algorithm. The next section presents the first approach.

%\vspace{-0.25cm}
\section{Disjunctive Programming for SaA Selection}~\label{sec:Big-M}

\vspace{-0.4cm}
The first approach is developed based on disjunctive programming principles~\cite{nemhauser1988integer,grossmann2002review}. The following theorem presents this result. 

\begin{theorem}~\label{thrm:Big-M}
The optimization problem \eqref{eq:OFS-SaAProblem} is equivalent to
\vspace{-0.2cm}
\begin{subequations}~\label{eq:Big-M-SaAProblem}
\begin{align}
\minimize & \;\;\;\sum_{k=1}^{N} \pi_k + \gamma_k ~\label{eq:Big-M-SaAProblem-1} \\
\subjectto  & \notag\\
& \hspace{-1.3cm}\m{A}^{\top}\m{P}+\m{PA}+\m{C}^{\top}\m\Theta^{\top}\m B^{\top}+\m{B}\m {\Theta C} \prec 0 ~\label{eq:Big-M-SaAProblem-2}\\
\iffalse
&\hspace{-1.4cm}\bmat{|\m \Theta_{ij}| \\ |\m \Theta_{ij}| \\ |\m \Theta_{ij}-\m N_{ij}|} \leq  L_1\hspace{-0.0cm}\bmat{0&1&0\\0&0&1\\2&-1&-1}\hspace{-0.0cm}\bmat{1\\\pi_i\\\gamma_j}, \;\, \forall i,j ~\label{eq:Big-M-SaAProblem-3}\\
&\hspace{-1.3cm}\vert\m M_{ij}-\m \Omega_{ij}\vert\leq L_2(1-\pi_i),\;\, \forall i,j\;\, \text{where}\;\,i=j~\label{eq:Big-M-SaAProblem-new2}\\
&\hspace{-1.4cm}\bmat{|\m M_{ij}| \\ |\m \Omega_{ij}| \\ |\m M_{ij}-\m \Omega_{ij}|} \leq  L_2\hspace{-0.0cm}\bmat{1&0&1&-1\\1&1&0&-1\\2&-1&-1&1}\hspace{-0.0cm}\bmat{1\\\pi_i\\\pi_j\\\vert\pi_i-\pi_j\vert},\nonumber \\
&\hspace{-1.4cm}\;\,\forall i,j \;\,\text{where}\;\,i\neq j~\label{eq:Big-M-SaAProblem-new3}\\
&\hspace{-1.3cm}\m \Omega = (\mB^{\top}\mB)^{-1}\mB^{\top}\mP\mB~\label{eq:Big-M-SaAProblem-new1} \\
\fi
&\hspace{-1.3cm}\m \Omega = (\mB^{\top}\mB)^{-1}\mB^{\top}\mP\mB~\label{eq:Big-M-SSAProblem-new1} \\
				&\hspace{-1.3cm}\m \Xi = (\mI-\mB(\mB^{\top}\mB)^{-1}\mB^{\top})\mP\mB~\label{eq:Big-M-SSAProblem-new4} \\
				&\hspace{-1.3cm}\bmat{\m\Psi_1(\mN,\m\Theta)\\\m\Psi_2(\mM,\m\Omega)\\\m\Psi_3(\m\Xi)} \leq \bmat{L_1\m\Delta_1(\m \Gamma,\m\Pi)\\L_2\m\Delta_2(\m\Pi)\\L_3\m\Delta_3(\m\Pi)}~\label{eq:Big-M-SSAProblem-all} \\
&\hspace{-1.3cm} \eqref{eq:OFS-SaAProblem-4},\,\eqref{eq:OFS-SaAProblem-5},~\label{eq:Big-M-SaAProblem-4}
\end{align}
\end{subequations}
where 
	{\small \begin{subequations}
		\begin{align*}
							\m\Psi_1(\mN,\m\Theta) = &\bmat{\mathrm{Vec}(\m\Theta)\\-\mathrm{Vec}(\m\Theta)\\\mathrm{Vec}(\m\Theta)\\-\mathrm{Vec}(\m\Theta)\\\mathrm{Vec}(\m\Theta-\mN)\\-\mathrm{Vec}(\m\Theta-\mN)} \\
				\m\Delta_1(\m \Gamma,\m\Pi) =  &\bmat{\mathrm{Diag}(\mI_{n_y}\otimes \m\Pi)\\\mathrm{Diag}(\mI_{n_y}\otimes \m\Pi)\\\mathrm{Diag}(\m\Gamma\otimes\mI_{n_u})\\\mathrm{Diag}(\m\Gamma\otimes\mI_{n_u})\\\mathrm{Diag}(2\mI_{n_u\times n_y}-\mI_{n_y}\otimes \m\Pi-\m\Gamma\otimes\mI_{n_u})\\\mathrm{Diag}(2\mI_{n_u\times n_y}-\mI_{n_y}\otimes \m\Pi-\m\Gamma\otimes\mI_{n_u})} \\
				\m\Psi_2(\mM,\m\Omega)  = &\bmat{\mathrm{Vec}(\mM)\\-\mathrm{Vec}(\mM)\\\mathrm{Vec}(\m\Omega)\\-\mathrm{Vec}(\m\Omega)\\\mathrm{Vec}(\mM-\m\Omega)\\-\mathrm{Vec}(\mM-\m\Omega)} \\
				\m\Delta_2(\m\Pi) =  &\bmat{\mathrm{Diag}(\mI_{n_u^2}-\mI_{n_u}\otimes \m\Pi+\m\Pi\otimes\mI_{n_u})\\\mathrm{Diag}(\mI_{n_u^2}-\mI_{n_u}\otimes \m\Pi+\m\Pi\otimes\mI_{n_u})\\\mathrm{Diag}(\mI_{n_u^2}+\mI_{n_u}\otimes \m\Pi-\m\Pi\otimes\mI_{n_u})\\\mathrm{Diag}(\mI_{n_u^2}+\mI_{n_u}\otimes \m\Pi-\m\Pi\otimes\mI_{n_u})\\\mathrm{Diag}(2\mI_{n_u^2}-\mI_{n_u}\otimes \m\Pi-\m\Pi\otimes\mI_{n_u})\\\mathrm{Diag}(2\mI_{n_u^2}-\mI_{n_u}\otimes \m\Pi-\m\Pi\otimes\mI_{n_u})} \\
				\m\Psi_3(\m \Xi) = &\bmat{\mathrm{Vec}(\m\Xi)\\-\mathrm{Vec}(\m\Xi)} \\
			\m\Delta_3(\m\Pi) =  &\bmat{\mathrm{Diag}(\mI_{n_x\times n_u}-\m\Pi\otimes\mI_{n_x})\\\mathrm{Diag}(\mI_{n_x\times n_u}-\m\Pi\otimes\mI_{n_x})},
			%\vert\m \Xi_{ij}\vert\leq L_3(1-\pi_j),	
		\end{align*}
	\end{subequations}}

with $\m \Theta \in \mathbb{R}^{n_u\times n_y}$, $\m \Omega \in \mathbb{R}^{n_u\times n_u}$ and $\m \Xi \in \mathbb{R}^{n_x\times n_u}$ are additional variables and $L_1,L_2,L_3\in\mathbb{R}_{++}$ are sufficiently large constants.
\end{theorem}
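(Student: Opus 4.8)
The plan is to show that the auxiliary matrices $\m\Theta$, $\m\Omega$, $\m\Xi$ together with the big-$M$ inequalities \eqref{eq:Big-M-SSAProblem-all} reproduce \emph{exactly} the two nonconvex constraints \eqref{eq:OFS-SaAProblem-2}--\eqref{eq:OFS-SaAProblem-3}, so that the feasible sets of \eqref{eq:OFS-SaAProblem} and \eqref{eq:Big-M-SaAProblem} coincide in the variables $\{\m\pi,\m\gamma,\mN,\mM,\mP\}$. Since the objective \eqref{eq:Big-M-SaAProblem-1} equals \eqref{eq:OFS-SaAProblem-1} and depends only on $\m\pi,\m\gamma$, equivalence of optimal values and of minimizers then follows. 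I would organize the argument into three blocks, one per auxiliary variable.

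\emph{First block (trilinear term).} I set $\m\Theta \triangleq \m\Pi\mN\m\Gamma$ and note that, because $\m\Pi$ and $\m\Gamma$ are diagonal, $\Theta_{ij}=\pi_{\sigma(i)}\gamma_{\tau(j)}N_{ij}$, where $\sigma(i)$ and $\tau(j)$ are the actuator and sensor nodes indexing row $i$ and column $j$. Reading \eqref{eq:Big-M-SSAProblem-all} through $\m\Psi_1,\m\Delta_1$ with the vec/Diag bookkeeping (so that $\mathrm{Diag}(\mI_{n_y}\otimes\m\Pi)$ and $\mathrm{Diag}(\m\Gamma\otimes\mI_{n_u})$ select $\pi_{\sigma(i)}$ and $\gamma_{\tau(j)}$ at the vec-position of $\Theta_{ij}$), the first block becomes the entrywise triple $|\Theta_{ij}|\le L_1\pi_{\sigma(i)}$, $|\Theta_{ij}|\le L_1\gamma_{\tau(j)}$, and $|\Theta_{ij}-N_{ij}|\le L_1(2-\pi_{\sigma(i)}-\gamma_{\tau(j)})$. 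A four-case check over $(\pi_{\sigma(i)},\gamma_{\tau(j)})\in\{0,1\}^2$ shows these force $\Theta_{ij}=\pi_{\sigma(i)}\gamma_{\tau(j)}N_{ij}$ whenever $L_1>\max(|\Theta_{ij}|,|N_{ij}|)$; substituting $\m\Theta$ then turns the matrix inequality \eqref{eq:OFS-SaAProblem-2} into the LMI \eqref{eq:Big-M-SaAProblem-2}.

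\emph{Second block (bilinear equality).} This is the conceptual core, and the step I expect to be the main obstacle. Using that $\mB$ has full column rank, so $\mB^{\top}\mB$ is invertible, I would left-multiply \eqref{eq:OFS-SaAProblem-3} by $(\mB^{\top}\mB)^{-1}\mB^{\top}$ and separately project onto the orthogonal complement of $\range(\mB)$, which yields the decomposition $\mP\mB=\mB\m\Omega+\m\Xi$ with $\m\Omega,\m\Xi$ exactly as in \eqref{eq:Big-M-SSAProblem-new1}--\eqref{eq:Big-M-SSAProblem-new4}. I would then prove $\mB\m\Pi\mM=\mP\mB\m\Pi \iff \m\Pi\mM=\m\Omega\m\Pi \text{ and } \m\Xi\m\Pi=\m 0$: the forward direction is the two multiplications just performed, and the converse follows from $\mB\m\Pi\mM=\mB\m\Omega\m\Pi=(\mP\mB-\m\Xi)\m\Pi=\mP\mB\m\Pi$. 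This removes the binary-times-continuous coupling carried by $\mP\mB\m\Pi$ and $\m\Pi\mM$, replacing it with the affine-in-$\mP$ definitions of $\m\Omega,\m\Xi$ plus products of binaries with the new variables, all amenable to big-$M$ encoding. The delicate points are getting the projections right and confirming that the residual equals precisely the $\m\Omega,\m\Xi$ the theorem introduces.

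\emph{Third block (big-$M$ for the decomposed equality) and conclusion.} For $\m\Pi\mM=\m\Omega\m\Pi$ the entrywise condition is $\pi_{\sigma(i)}M_{ij}=\pi_{\sigma(j)}\Omega_{ij}$; reading $\m\Psi_2,\m\Delta_2$ gives $|M_{ij}|\le L_2(1-\pi_{\sigma(i)}+\pi_{\sigma(j)})$, $|\Omega_{ij}|\le L_2(1+\pi_{\sigma(i)}-\pi_{\sigma(j)})$, and $|M_{ij}-\Omega_{ij}|\le L_2(2-\pi_{\sigma(i)}-\pi_{\sigma(j)})$, and the same four-case check recovers the required identity for $L_2$ large. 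Likewise $\m\Psi_3,\m\Delta_3$ reduce to $|\Xi_{ij}|\le L_3(1-\pi_{\sigma(j)})$, which is exactly $\m\Xi\m\Pi=\m 0$ for $L_3$ large. Collecting the three blocks, any feasible point of \eqref{eq:Big-M-SaAProblem} restricts, for any positive $L_1,L_2,L_3$, to a feasible point of \eqref{eq:OFS-SaAProblem} of equal objective, while conversely any feasible point of \eqref{eq:OFS-SaAProblem} lifts to one of \eqref{eq:Big-M-SaAProblem} once the $L_i$ exceed the magnitudes of the corresponding entries; taking them above the entries of an optimal solution gives equality of optimal values. I would close by noting that ``sufficiently large'' is well posed because feasibility of \eqref{eq:OFS-SaAProblem} is witnessed by finite matrices $\mN,\mM,\mP$.
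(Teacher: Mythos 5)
Your proposal is correct and follows essentially the same disjunctive-programming route as the paper: introduce $\m\Theta=\m\Pi\mN\m\Gamma$ for the trilinear term, split $\mP\mB$ into its range-of-$\mB$ component $\mB\m\Omega$ and the residual $\m\Xi$ via the pseudoinverse $(\mB^{\top}\mB)^{-1}\mB^{\top}$, and enforce each resulting binary--continuous product entrywise through big-$M$ inequalities whose right-hand sides vanish exactly in the binding cases. Your explicit proof that $\mB\m\Pi\mM=\mP\mB\m\Pi$ is equivalent to $\m\Pi\mM=\m\Omega\m\Pi$ together with $\m\Xi\m\Pi=\m 0$ is precisely the role played by \eqref{eq:Big-M-SSAProblem-new4} and $\m\Delta_3$ in the theorem, so nothing is missing.
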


%\begin{proof}
The proof is omitted from this version of the work, and will be included in the extended version of the manuscript \cite{Nugroho2018}.
\vspace{-0.05cm}
Although \eqref{eq:Big-M-SaAProblem} is equivalent to \eqref{eq:OFS-SaAProblem}, the quality of the solution that comes out of \eqref{eq:Big-M-SaAProblem} depends on the choice of $L_1$ and $L_2$. Theorem \ref{thrm:Big-M} allows the SaA selection for static output feedback stabilizability to be solved as a MI-SDP. The next section presents a departure from MI-SDP to an algorithm that solves~\eqref{eq:OFS-SaAProblem}.

%\vspace{-0.3cm}
\section{Binary Search Algorithm for SaA Selection}~\label{sec:BSAlgo}

\vspace{-0.91cm}
\subsection{Introduction}
In this section, we present an algorithm that is similar in spirit to binary search routines. In what follows, we provide the definitions and examples that are important to understand the algorithm.

\begin{mydef}~\label{def:setS}
Let $\mathcal{S}_\pi$ and $\mathcal{S}_\gamma$ be two $N$-tuples representing the selection of actuator and sensor. That is, $\mathcal{S}_\pi \triangleq (\pi_1,\ldots,\pi_N)$ and $\mathcal{S}_\gamma \triangleq (\gamma_1,\ldots,\gamma_N)$. Then, the selection of SaA can be defined as $\mathcal{S} \triangleq (\mathcal{S}_\pi,\mathcal{S}_\gamma)$ such that $\lbrace\m \Pi,\,\m \Gamma\rbrace = \mathcal{G}(\mathcal{S})$, $\m\Pi = \mathcal{G}_{\pi}(\mathcal{S})$, and $\m\Gamma = \mathcal{G}_{\gamma}(\mathcal{S})$ where $\mathcal{G}(\cdot):\mathcal{S}\rightarrow \mathbb{R}^{n_u\times n_u}\times \mathbb{R}^{n_y\times n_y}$, $\mathcal{G}_{\pi}(\cdot):\mathcal{S}\rightarrow \mathbb{R}^{n_u\times n_u}$, and $\mathcal{G}_{\gamma}(\cdot):\mathcal{S}\rightarrow \mathbb{R}^{n_y\times n_y}$ are linear maps. The number of nodes with active SaA can be defined as $\mathcal{H}(\mathcal{S}) \triangleq \sum_{k=1}^{N}  \pi_k +  \gamma_k$ where $\mathcal{H}(\cdot):\mathcal{S}\rightarrow \mathbb{Z}_{+}$.
\end{mydef}

\begin{mydef}~\label{def:bigS}
Let $\mathbfcal{S}\triangleq \lbrace\mathcal{S}_q\rbrace_{q=1}^\sigma$ be the candidate set such that it contains all possible combinations of SaA where $\sigma$ denotes the number of total combinations, i.e., $\sigma\triangleq\vert\mathbfcal{S}\vert$. Then, the following conditions hold: 
\begin{enumerate}
\item For all $\mathcal{S}\in\mathbfcal{S}$, $\lbrace\m \Pi,\,\m \Gamma\rbrace = \mathcal{G}(\mathcal{S})$ is feasible for \eqref{eq:OFS-SaAProblem-4}, and
\item $\mathbfcal{S}$ is ordered such that $\mathcal{H}(\mathcal{S}_{q-1}) \leq \mathcal{H}(\mathcal{S}_q)$.
%\begin{subequations} \label{eq:bigS}
%\begin{align*}
%\mathcal{G}(\mathbfcal{S}) = \Big\lbrace\lbrace\m \Pi_1,\,\m \Gamma_1\rbrace,\hdots,\lbrace\m \Pi_q,\,\m \Gamma_q\rbrace,\hdots,\lbrace\m \Gamma_\sigma,\,\m \Pi_\sigma\rbrace\Big\rbrace 
%\mathbfcal{S} = \Big\lbrace\mathcal{S}_1,\mathcal{S}_2,\hdots,%\mathcal{S}_{q-1},\mathcal{S}_q,\mathcal{S}_{q+1},\hdots,%\mathcal{S}_{\sigma}\Big\rbrace
%\end{align*}
%\text{where}
%\begin{align*} 
%\tr(\m\Pi_{q-1})+\tr(\m\Gamma_{q-1}) \leq \tr(\m\Pi_{q})+\tr(\m\Gamma_{q}).
%\mathcal{H}(\mathcal{S}_{q-1}) \leq \mathcal{H}(\mathcal{S}_q).
%\end{align*}
%\end{subequations}
\end{enumerate}
\end{mydef}

\begin{exmpl}\label{ex:Ex1}
Suppose that the dynamical system consists of two nodes with one input and one output on each node. If the logistic constraint dictates that $ 1 \leq \mathcal{H}(\mathcal{S}) <  4$ for all $\mathcal{S}\in\mathbfcal{S}$, then the candidate set $\mathbfcal{S}$ can be constructed as
\vspace{-0.1cm}
\begin{align*}
\mathbfcal{S} = \Big\lbrace
&(1,0,0,0),(0,1,0,0),(0,0,1,0),(0,0,0,1),\\
&(1,1,0,0),(1,0,1,0),(1,0,0,1),(0,1,1,0),\\
&(0,1,0,1),(0,0,1,1),(1,1,1,0),(1,1,0,1),\\
&(1,0,1,1),(0,1,1,1)\Big\rbrace.
\end{align*}
\end{exmpl} 

\vspace{-0.6cm}
\subsection{Binary Search Algorithm to Solve \eqref{eq:OFS-SaAProblem}}\label{BSA-1}

The objective of this algorithm is to find an optimal solution $\mathcal{S}^*\in\mathbfcal{S}$ such that $\mathcal{H}(\mathcal{S}^*)\leq\mathcal{H}(\mathcal{S})$ for all $\mathcal{S}\in\mathbfcal{V}$ where $\mathbfcal{V}\triangleq\lbrace \mathcal{S}\in\mathbfcal{S} \,|\, \lbrace\m \Pi,\,\m \Gamma\rbrace = \mathcal{G}(\mathcal{S}) \textit{ is feasible for \eqref{eq:OFS}}\rbrace$. Realize that $\mathcal{S}^*$ might be not unique\footnote[1]{The solution might not be unique since there could be more than one combinations of SaA that yield minimum number of activated SaA, while still generating feasible solution to the LMIs for static output feedback stabilizability.} and finding one is adequate for our purpose.

The routine to solve SaA selection with static output feedback is now described as follows. Let $p$ be the index of iteration and $q$ be the index of  position in the ordered set $\mathbfcal{S}$. Hence at iteration $p$, the candidate set that contains all possible combinations of SaA can be represented as $ \mathbfcal{S}_p$, with $\sigma=\vert\mathbfcal{S}_p\vert$, and any element of $ \mathbfcal{S}_p$ at position $q$ can be represented by $\mathcal{S}_q$. Also, let $\mathcal{S}^*$ be the current solution, which is initialized as $\mathcal{S}^* = (0)^{2N}$.

Next, obtain $\mathcal{S}_q$ where $\mathcal{S}_q\in\mathbfcal{S}_p$ and $q = \lceil\sigma/2\rceil$. At this step, we need to determine whether system \eqref{equ:StateSpaceSaA-all} is output feedback stabilizable with a certain combination of SaA $\lbrace\m \Pi_q,\,\m \Gamma_q\rbrace = \mathcal{G}(\mathcal{S}_q)$. To that end, we use the LMIs from Proposition \ref{prs:OFS}. When solving \eqref{eq:OFS} for given $\lbrace\m \Pi_q,\,\m \Gamma_q\rbrace$, let $\m B$ and $\m C$ in \eqref{eq:OFS} be substituted with $\m B_q$ and $\m C_q$ so that both represent the nonzero components of $\m B \m \Pi_q$ and $\m \Gamma_q\m C$ that correspond to activated SaA. If $\m B_q$ and $\m C_q$ are feasible for \eqref{eq:OFS}, then $\mathcal{S}^*$ is updated such that $\mathcal{S}^*=\mathcal{S}_q$. Since $\mathcal{S}_q$ is feasible, then we can discard all combinations that have more or equal number of active SaA. Otherwise, if $\m B_q$ and $\m C_q$ are infeasible for \eqref{eq:OFS}, then we can discard $\mathcal{S}_q$ and all combinations that (a) have less number of active SaA than $\mathcal{S}_q$ \textbf{and} (b) the active SaA are included in $\mathcal{S}_q$. %  Lemma~\ref{lem:1} explains the need for doing so. 

Realize that the above method reduces the size of $\mathbfcal{S}_p$ in every iteration because one or more elements of $\mathbfcal{S}_p$ are discarded. Let $\mathbfcal{S}_{p+1}$ be the new set of all possible combination of SaA after all unwanted combinations of SaA are discarded. Then, we can update the number of possible combinations of SaA as $\sigma = \vert\mathbfcal{S}_{p+1}\vert$. The algorithm now continues and terminates when $\mathbfcal{S}_p = \emptyset$. The detail of this algorithm is given in Algorithm~\ref{alg:DaC}. Example \ref{exp:2} gives an illustration how $\mathbfcal{S}_p$ is constructed in every iteration.

\vspace{-0.05cm}
\begin{exmpl}\label{exp:2}
Consider again the  dynamic system from Example~\ref{ex:Ex1}. Let $(1,0,0,1)$ be the starting combination and, for the sake of illustration, assume that \eqref{eq:OFS} is infeasible for this combination. Then, by Algorithm \ref{alg:DaC}, combinations $(1,0,0,0)$ and $(0,0,0,1)$ are discarded. The candidate set now comprises the following elements 
	\begin{align*}
	\mathbfcal{S}_2= \Big\lbrace
&(0,1,0,0),(0,0,1,0),
(1,1,0,0),(1,0,1,0),\\
&(0,1,1,0),(0,1,0,1),(0,0,1,1),(1,1,1,0),\\&(1,1,0,1),(1,0,1,1),(0,1,1,1)\Big\rbrace.
	\end{align*}
Let $(0,1,0,1)$ be the new starting point and assume that this combination is feasible for \eqref{eq:OFS}. Then, all combinations that have greater or equal number of active SaA can be discarded. The remaining possible candidates on the candidate set are
	\begin{align*}
\mathbfcal{S}_3 = \Big\lbrace
&(0,1,0,0),(0,0,1,0)\Big\rbrace.
\end{align*}
This algorithm continues in a fashion similar to the above routine. If none of these combinations in $\mathbfcal{S}_3$ is feasible, then Algorithm~\ref{alg:DaC} returns $\mc{S}^*= (0,1,0,1)$ as the solution.  
\end{exmpl} 

\vspace{-0.2cm}
\begin{algorithm}[h]
\caption{Binary Search Algorithm}\label{alg:DaC}
\begin{algorithmic}[1]
\STATE \textbf{initialize:} $\mathcal{S}^*=(0)^{2N}$, $p=1$
\STATE \textbf{input:}  $\mathbfcal{S}_p$
\WHILE{$\mathbfcal{S}_p \neq \emptyset$}
\STATE \textbf{compute:} $\sigma\leftarrow\vert\mathbfcal{S}_p\vert$, $q \leftarrow \lceil \sigma/2 \rceil$, $\mathcal{S}_q\in\mathbfcal{S}_p$
\IF{\eqref{eq:OFS} is feasible}\label{alg1:step5}
\STATE $\mathcal{S}^*\leftarrow\mathcal{S}_q$, $\mathbfcal{S}_p \leftarrow \mathbfcal{S}_p\setminus\lbrace\mathcal{S}\in\mathbfcal{S}_p\,\vert\,\mathcal{H}(\mathcal{S}) \geq \mathcal{H}(\mathcal{S}_q)\rbrace$
%\STATE $\mathcal{S}^*\leftarrow\mathcal{S}_q$
\ELSE
\STATE $\mathbfcal{S}_p\leftarrow\mathbfcal{S}_p\setminus\lbrace\mathcal{S}\in\mathbfcal{S}_p\,\vert\,\mathcal{S}_q\vee\mathcal{S} = \mathcal{S}_q \rbrace$ \label{alg1:step9}
\ENDIF
\STATE $p \leftarrow p + 1$
\ENDWHILE 
\STATE \textbf{output:} $\mathcal{S}^*$
\end{algorithmic}
\end{algorithm} 
\vspace{-0.1cm}

\begin{theorem}\label{thm:thm2}
Algorithm~\ref{alg:DaC} returns an optimal solution of~\eqref{eq:OFS-SaAProblem}.
\end{theorem}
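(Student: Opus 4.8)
The plan is to reduce the optimality of Algorithm~\ref{alg:DaC} to two ingredients: a \emph{monotonicity} property of the feasible set $\mathbfcal{V}$ that validates the pruning performed in the infeasible branch (step~\ref{alg1:step9}), and a loop-invariant bookkeeping argument showing that the recorded incumbent $\mathcal{S}^*$ together with the surviving candidate set always retains an optimizer of \eqref{eq:OFS-SaAProblem}. Throughout I write $\mathcal{S}'\preceq\mathcal{S}$ to mean that the support of $\mathcal{S}'$ is contained in that of $\mathcal{S}$ (equivalently $\mathcal{S}\vee\mathcal{S}'=\mathcal{S}$, as in step~\ref{alg1:step9}), and I let $h^*\triangleq\min_{\mathcal{S}\in\mathbfcal{V}}\mathcal{H}(\mathcal{S})$ denote the optimal value, assuming $\mathbfcal{V}\neq\emptyset$ (otherwise no stabilizing selection exists and the trivial output is correct).

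The heart of the argument is the monotonicity lemma: if $\mathcal{S}\in\mathbfcal{V}$ and $\mathcal{S}\preceq\mathcal{S}''$, then $\mathcal{S}''\in\mathbfcal{V}$; equivalently, every subset of an infeasible combination is infeasible. I would prove this by exhibiting, from a feasible triple $\{\m P,\m N,\m M\}$ for the smaller combination, a feasible triple for the larger one. For an added \emph{sensor} this is immediate: the coupling equality \eqref{eq:OFS2} does not involve $\m C$, so I keep $\m P$ and $\m M$ and simply zero-pad the columns of $\m N$ corresponding to the new output, which leaves the product $\m B\m N\m C$, hence the left-hand side of \eqref{eq:OFS1}, unchanged. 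The main obstacle is an added \emph{actuator}, because enlarging $\range(\m B_q)$ makes \eqref{eq:OFS2}, i.e.\ $\range(\m B_q)$ being $\m P$-invariant, strictly harder to satisfy, so one cannot simply reuse $\m P$. Here I would pass through the closed loop: feasibility of \eqref{eq:OFS} yields $\m F=\m M^{-1}\m N$ with $\m A+\m B_q\m F\m C_q$ Hurwitz (Proposition~\ref{prs:OFS}), and zero-padding $\m F$ in the rows of the new actuator gives $\m F'$ with $\m A+\m B_{q'}\m F'\m C_q=\m A+\m B_q\m F\m C_q$ still Hurwitz; appealing to the fact that the characterization in Proposition~\ref{prs:OFS} can be met whenever a stabilizing gain exists, by constructing a Lyapunov matrix for the closed loop that renders $\range(\m B_{q'})$ invariant, I recover a feasible $\{\m P',\m N',\m M'\}$ for the enlarged combination. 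I expect the construction of this invariance-preserving Lyapunov matrix to be the technically delicate point.

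With the lemma in hand, the rest is bookkeeping. I would define the incumbent cost $c^*\triangleq\mathcal{H}(\mathcal{S}^*)$, set to $+\infty$ while $\mathcal{S}^*$ is still the initialization $(0)^{2N}$ (which is updated only to a member of $\mathbfcal{V}$), and the best surviving cost $f(\mathbfcal{S}_p)\triangleq\min_{\mathcal{S}\in\mathbfcal{S}_p\cap\mathbfcal{V}}\mathcal{H}(\mathcal{S})$, and maintain the invariant $\min\{c^*,f(\mathbfcal{S}_p)\}=h^*$. It holds initially since $\mathbfcal{S}_1=\mathbfcal{S}\supseteq\mathbfcal{V}$ and $c^*=+\infty$. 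In the infeasible branch, step~\ref{alg1:step9} deletes only combinations $\mathcal{S}\preceq\mathcal{S}_q$, all of which lie outside $\mathbfcal{V}$ by the lemma, so $f$ is unchanged; in the feasible branch, $\mathcal{S}_q\in\mathbfcal{V}$ so $c^*$ is reset to $\mathcal{H}(\mathcal{S}_q)$ and every deleted combination has cost $\geq\mathcal{H}(\mathcal{S}_q)=c^*$, so any optimizer of cost below $\mathcal{H}(\mathcal{S}_q)$ survives while the bound $h^*\leq\mathcal{H}(\mathcal{S}_q)$ keeps $\min\{c^*,f\}$ equal to $h^*$.

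Because $\mathcal{S}_q$ itself is removed in both branches, $|\mathbfcal{S}_p|$ strictly decreases, so the loop terminates after finitely many iterations with $\mathbfcal{S}_p=\emptyset$, i.e.\ $f(\mathbfcal{S}_p)=+\infty$. The invariant then forces $c^*=h^*$, and since $c^*<\infty$ implies $\mathcal{S}^*$ was updated in the feasible branch and hence $\mathcal{S}^*\in\mathbfcal{V}$, the output $\mathcal{S}^*$ attains the optimal value $h^*$ and is therefore an optimal solution of \eqref{eq:OFS-SaAProblem}. I would finally remark that the median choice $q=\lceil\sigma/2\rceil$ and the ordering of $\mathbfcal{S}$ in Definition~\ref{def:bigS} affect only efficiency, not correctness, since the invariant argument is independent of which surviving element is tested.
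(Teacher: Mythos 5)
Your skeleton --- a monotonicity lemma legitimizing the pruning in step~\ref{alg1:step9} plus a loop invariant carrying an optimizer to termination --- is the natural one, and the bookkeeping half of your argument (the invariant $\min\{c^*,f(\mathbfcal{S}_p)\}=h^*$, strict decrease of $|\mathbfcal{S}_p|$, and the extraction of $\mathcal{S}^*$ at termination) is correct as written. The sensor half of your monotonicity lemma is also fine: zero-padding the new columns of $\m N$ leaves the left side of \eqref{eq:OFS1} unchanged and \eqref{eq:OFS2} does not involve $\m C$.

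The gap is exactly where you flagged it, and it cannot be closed. Your actuator step appeals to ``the characterization in Proposition~\ref{prs:OFS} can be met whenever a stabilizing gain exists,'' but Proposition~\ref{prs:OFS} is only \emph{sufficient}: feasibility of \eqref{eq:OFS} requires, beyond a stabilizing $\m F$, a closed-loop Lyapunov matrix $\m P$ for which $\range(\m B_q)$ is $\m P$-invariant, and no such $\m P$ need exist. In fact the monotonicity you need is false. Take $\m C=\m I_3$, $\m B=\mathrm{blkdiag}\bigl(1,[1\;\,0]^{\top}\bigr)$ (two single-input actuators, all sensors kept active), and
\begin{equation*}
\m A=\sbmat{0&0&0\\1&-10&-20\\1&5&1},\qquad \m A_2\triangleq\sbmat{-10&-20\\5&1}\ \text{Hurwitz},\ A_{33}=1>0,
\end{equation*}
which satisfies Assumption~\ref{asmp:detectandstable}. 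With only the first actuator, \eqref{eq:OFS2} forces $\m P=\mathrm{blkdiag}(p,\m P_2)$; choosing $\m P_2$ to be any Lyapunov matrix for $\m A_2$ and $\m N=\bigl[-1,\;-[1\;\,1]\m P_2\bigr]$ makes the left side of \eqref{eq:OFS1} equal to $\mathrm{blkdiag}(-2,\;\m A_2^{\top}\m P_2+\m P_2\m A_2)\prec 0$, so this combination is feasible. With both actuators, \eqref{eq:OFS2} forces $\m P=\mathrm{blkdiag}(\m P_1,p_3)$, and since the third row of $\m B_q\m N\m C$ vanishes for every $\m N$, the $(3,3)$ entry of the left side of \eqref{eq:OFS1} equals $2p_3A_{33}>0$: infeasible for every $\m N,\m P$. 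So a superset of a feasible combination is infeasible, and step~\ref{alg1:step9}, applied after testing the two-actuator combination, deletes the strictly cheaper feasible one. This is not a defect of your write-up alone: any proof of the theorem with $\mathbfcal{V}$ defined by \eqref{eq:OFS}-feasibility must establish this monotonicity, which the example refutes; contrast Propositions~\ref{prs:RedundantControllability}--\ref{prs:RedundantObservability}, where stabilizability and detectability genuinely are monotone in the SaA support. A correct statement would either assume monotonicity of \eqref{eq:OFS}-feasibility for the instance at hand, or restrict the infeasible-branch pruning to deletions certified by those two propositions.
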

The proof is omitted from this version of the work, and will be included in the extended version of the manuscript \cite{Nugroho2018}. The reason why Algorithm~\ref{alg:DaC} returns \textit{an} optimal solution of \eqref{eq:OFS-SaAProblem} is due to the fact that two SaA configurations can return the same objective value of~\eqref{eq:OFS-SaAProblem}. However, one SaA configuration can yield a \textit{more} stable closed loop system in terms of the distance from the $j\omega$-axis. This is shown in the numerical tests (Section~\ref{sec:results}).

\vspace{-0.3cm}
\subsection{Modified Binary Search Algorithm}\label{BSA-2}

In Algorithm \ref{alg:DaC}, the LMI \eqref{eq:OFS} is solved in every iteration to determine whether a particular combination of SaA yields a feasible or infeasible solution to the static output feedback problem. In this section, we provide a modification to Algorithm~\ref{alg:DaC} so that it no longer requires solving the LMI feasibility problem at each iteration---potentially resulting in a reduction in the computational time. 

This simple modification is carried out by replacing Step \ref{alg1:step5} in Algorithm \ref{alg:DaC} with stabilizability and detectability tests of linear dynamic systems. To this end, the following propositions are useful. 

\iftrue
\begin{myprs}\label{prs:RedundantControllability}
Let  $\mathcal{S}$ be an arbitrary combination of SaA. If system \eqref{equ:StateSpaceSaA-all} is stabilizable for $\mathcal{S}$, then activating one or more actuators from $\mathcal{S}$ will keep system \eqref{equ:StateSpaceSaA-all} stabilizable. Similarly, if system \eqref{equ:StateSpaceSaA-all} is unstabilizable for $\mathcal{S}$, then deactivating one or more actuators from $\mathcal{S}$ will keep system \eqref{equ:StateSpaceSaA-all} unstabilizable.
\end{myprs}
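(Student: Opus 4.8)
The plan is to recognize that stabilizability of \eqref{equ:StateSpaceSaA-all} for a selection $\mathcal{S}$ is precisely stabilizability of the pair $(\m A,\m B\m\Pi)$ with $\m\Pi = \mathcal{G}_\pi(\mathcal{S})$, and then to exploit the monotonicity of the input range (equivalently, the controllable subspace) under the activation of additional actuators. Let $\m\Pi'$ denote the selection obtained from $\m\Pi$ by activating one or more actuators, so that the diagonal binary entries satisfy $\pi_i' \geq \pi_i$ for every $i$. Because $\m\Pi$ and $\m\Pi'$ are block-diagonal $0/1$ matrices with $\pi_i'\pi_i = \pi_i$, they obey the identity $\m\Pi'\m\Pi = \m\Pi$, and hence $\m B\m\Pi = \m B\m\Pi'\m\Pi$. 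This immediately yields the key inclusion $\range(\m B\m\Pi)\subseteq\range(\m B\m\Pi')$: activating actuators can only enlarge the input range.

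First I would prove the forward statement via the Popov--Belevitch--Hautus (PBH) test for stabilizability: $(\m A,\m B\m\Pi)$ is stabilizable if and only if every left eigenvector $\m v$ of $\m A$ associated with an eigenvalue $\lambda$ satisfying $\mathrm{Re}(\lambda)\geq 0$ has $\m v^{*}\m B\m\Pi \neq \m 0$. Assuming stabilizability for $\mathcal{S}$, I fix an arbitrary such mode $(\lambda,\m v)$, so $\m v^{*}\m B\m\Pi \neq \m 0$. Using $\m B\m\Pi = \m B\m\Pi'\m\Pi$ I would write $\m v^{*}\m B\m\Pi = (\m v^{*}\m B\m\Pi')\,\m\Pi$, whence a nonzero left-hand side forces $\m v^{*}\m B\m\Pi' \neq \m 0$. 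Since the mode was arbitrary, the PBH condition holds for $\m\Pi'$ as well, and $(\m A,\m B\m\Pi')$ is stabilizable. Equivalently, one can argue through the controllable subspaces $\mathcal{C}(\m\Pi)=\langle \m A\,|\,\range(\m B\m\Pi)\rangle$: the range inclusion gives $\mathcal{C}(\m\Pi)\subseteq\mathcal{C}(\m\Pi')$, and since stabilizability means the subspace spanned by the generalized eigenvectors of $\m A$ for eigenvalues with $\mathrm{Re}(\lambda)\geq 0$ lies in $\mathcal{C}(\m\Pi)$, it also lies in the larger $\mathcal{C}(\m\Pi')$.

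The second statement is simply the contrapositive of the first. If $\mathcal{S}$ is unstabilizable and $\mathcal{S}''$ is obtained by deactivating actuators, then $\mathcal{S}$ is recovered from $\mathcal{S}''$ by reactivating those same actuators; were $\mathcal{S}''$ stabilizable, the forward direction would force $\mathcal{S}$ to be stabilizable, contradicting the hypothesis. Hence $\mathcal{S}''$ is unstabilizable, and no separate argument is needed.

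I do not expect a genuine obstacle here; the argument is elementary once the identity $\m\Pi'\m\Pi=\m\Pi$ and the resulting range inclusion are in hand. The only points requiring care are the precise bookkeeping in the PBH step---verifying that the left-eigenvector condition transfers correctly through right-multiplication by $\m\Pi$---together with the standard caveat that, for a real $\m A$ with complex eigenvalues, one must use complex (conjugate-transpose) eigenvectors in the PBH test.
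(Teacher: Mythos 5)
Your proof is correct and follows essentially the same route as the paper's: both establish the first claim via the PBH left-eigenvector test for stabilizability and obtain the second claim as its contrapositive. The only cosmetic difference is bookkeeping---you keep the full matrices and use the identity $\m\Pi'\m\Pi=\m\Pi$ to show $\m v^{*}\m B\m\Pi\neq\m 0$ implies $\m v^{*}\m B\m\Pi'\neq\m 0$, whereas the paper partitions the activated columns as $\bmat{\m B_1 & \m B_2}$ and notes that $\m v^{\top}\bmat{\m B_1 & \m B_2}=\bmat{\m u & \m v^{\top}\m B_2}$ with $\m u\neq\m 0$; your remark about using conjugate-transpose eigenvectors for complex unstable modes is in fact slightly more careful than the paper's real-transpose notation.
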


\begin{proof}
We first prove the first part of the proposition. Let $\mathcal{S}\in\mathbfcal{S}$ with $\m \Pi=\mathcal{G}_{\pi}(\mathcal{S})$. Let $\mB_1\in\mathbb{R}^{n_x\times m_1}$ be a matrix that represents the nonzero components of $\mB\m\Pi$ that correspond to activated actuators. Since the pair $(\mA,\mB_1)$ is stabilizable, then we have $\m v^{\top}\mB_1 = \m u$ where $\m u^{\top}\in\mathbb{R}^{m_1}$ and $\m u\neq \m0$ \cite[Theorem 1]{Hautus1970} for all $\m v\in\lbrace \m v\in\mathbb{R}^{n_x}\,\vert\,\m v^{\top}(\mA-\lambda\mI)=0,\,\forall\lambda\geq 0,\,\lambda\in\Lambda(\mA)\rbrace$. Now, define $\mB \in \mathbb{R}^{n_x\times m}$ and $\mB_2 \in \mathbb{R}^{n_x\times m_2}$, with $m=m_1+m_2$ and $m \leq n_u$, such that $\mB_2$ represents the addition of activated actuators and $\mB = \bmat{\mB_1&\mB_2}$. Then, %Eventually
\vspace{-0.1cm}
\begin{align*}
\m v^{\top}\mB =\m v^{\top}\bmat{\mB_1&\mB_2} = \bmat{\m v^{\top}\mB_1&\m v^{\top}\mB_2} = \bmat{\m u&\m v^{\top}\mB_2}.
\end{align*}
Since $\m u \neq \m0$, the pair $(\mA,\mB)$ is also stabilizable, proving the first part of the proposition. Since the second part of the proposition is the contraposition of the first part, then the proof is complete.
\end{proof}

\else

\begin{proof}
We first prove the first part of the proposition. Let $\mathcal{S},\bar{\mathcal{S}},\hat{\mathcal{S}}\in\mathbfcal{S}$. The statement is equivalent to saying that, if the pair $(\m A,\m B \m \Pi)$ is stabilizable with $\m \Pi=\mathcal{G}_{\pi}(\mathcal{S})$, then for any $\bar{\m \Pi}=\mathcal{G}_{\pi}(\bar{\mathcal{S}})$ where $\mathcal{S}_{\pi}\wedge\bar{\mathcal{S}}_{\pi} = \mathcal{S}_{\pi}$, the pair $(\m A,\m B \bar{\m \Pi})$ is also stabilizable. By contradiction, assume that the pair $(\m A,\m B \bar{\m \Pi})$ is unstabilizable. Then, according to the {Popov-Belevitch-Hautus} (PBH) criteria for stabilizability \cite{Ghosh1995}\cite{Hautus1970}, there exists $\lambda\geq0$ where $\lambda \in \Lambda(\m{A})$ such that $\mathrm{Rank}\bmat{\m A - \lambda \m I&\m B\bar{\m \Pi}}<n_x$. Since $\mathcal{S}_{\pi}\wedge\bar{\mathcal{S}}_{\pi} = \mathcal{S}_{\pi}$, then there exists $\hat{\m \Pi}=\mathcal{G}_{\pi}(\hat{\mathcal{S}}_{\pi})$ with $\hat{\m \Pi}\neq \mO$ such that $\bar{\m \Pi} = \m \Pi + \hat{\m \Pi}$ where $\mathcal{S}_{\pi}\vee\hat{\mathcal{S}}_{\pi}=\bar{\mathcal{S}}_{\pi}$, and $\mathcal{S}_{\pi}\wedge\hat{\mathcal{S}}_{\pi}=(0)^N$. Then, $\bmat{\m A - \lambda \m I&\m B\bar{\m \Pi}}$ can be written as
\vspace{-0.05cm}
\begin{align}
\bmat{\m A - \lambda \m I&\m B\bar{\m \Pi}} = \bmat{\m A - \lambda \m I&\m B\m \Pi}+ \bmat{\m O & \m B\hat{\m \Pi}}.\nonumber
\end{align}
\vspace{-0.05cm}
\noindent Because $\m B$ is full column rank and $\hat{\m \Pi}\neq \mO$, then by the property of matrix rank \cite[Section 0.4.2]{Horn2012}, $\mathrm{Rank}\bmat{\m A - \lambda \m I&\m B{\m \Pi}}<n_x$ for some $\lambda\geq0$ where $\lambda \in \Lambda(\m{A})$, i.e., the pair $(\m A,\m B \m \Pi)$ is unstabilizable. Since we have a contradiction, then we can conclude that the pair $(\m A,\m B \bar{\m \Pi})$ is stabilizable. 
The second part of the proposition is logically equivalent to the first part, and thus omitted here.
\end{proof}

\begin{myprs}\label{prs:RedundantControllability}
Let $\mathcal{S},\bar{\mathcal{S}},\hat{\mathcal{S}}\in\mathbfcal{S}$. If the pair $(\m A,\m B \m \Pi)$ is stabilizable with $\m \Pi=\mathcal{G}_{\pi}(\mathcal{S})$, then for any $\bar{\m \Pi}=\mathcal{G}_{\pi}(\bar{\mathcal{S}})$ where $\mathcal{S}^a_{\pi}\subset\bar{\mathcal{S}}^a_{\pi}$, the pair $(\m A,\m B \bar{\m \Pi})$ is also stabilizable. Similarly, if the pair $(\m A,\m B \m \Pi)$ is unstabilizable with $\m \Pi=\mathcal{G}_{\pi}(\mathcal{S})$, then for any $\hat{\m \Pi}=\mathcal{G}_{\pi}(\hat{\mathcal{S}})$ where $\hat{\mathcal{S}}^a_{\pi}\subset\mathcal{S}^a_{\pi}$, the pair $(\m A,\m B \hat{\m \Pi})$ is unstabilizable.
\end{myprs}

\begin{proof}
We first prove the first part of the proposition. By contradiction, assume that the pair $(\m A,\m B \bar{\m \Pi})$ is unstabilizable. Then, according to the {Popov-Belevitch-Hautus} (PBH) criteria for stabilizability \cite{Ghosh1995}\cite{Hautus1970}, there exists $\lambda\geq0$ where $\lambda \in \Lambda(\m{A})$ such that $\mathrm{Rank}\bmat{\m A - \lambda \m I&\m B\bar{\m \Pi}}<n_x$. Since $\mathcal{S}^a_{\pi}\subset\bar{\mathcal{S}}^a_{\pi}$, then there exists $\hat{\m \Pi}\in\hat{\mathcal{S}}_{\pi}$ with $\hat{\mathcal{S}}_{\pi}\in\mathbfcal{S}$ and $\hat{\m \Pi}\neq \mO$ such that $\bar{\m \Pi} = \m \Pi + \hat{\m \Pi}$ where $\mathcal{S}^a_{\pi}\,\cup\,\hat{\mathcal{S}}^a_{\pi}=\bar{\mathcal{S}}^a_{\pi}$, and $\mathcal{S}^a_{\pi}\,\cap\,\hat{\mathcal{S}}^a_{\pi}=\emptyset$. Then, $\bmat{\m A - \lambda_i \m I&\m B\bar{\m \Pi}}$ can be written as
%\vspace{-0.05cm}
\begin{align}
\bmat{\m A - \lambda \m I&\m B\bar{\m \Pi}} = \bmat{\m A - \lambda \m I&\m B\m \Pi}+ \bmat{\m O & \m B\hat{\m \Pi}}.\nonumber
\end{align}
%\vspace{-0.05cm}
\noindent Because $\m B$ is full column rank and $\hat{\m \Pi}\neq \mO$, then by the property of matrix rank \cite{Horn2012}, $\mathrm{Rank}\bmat{\m A - \lambda \m I&\m B{\m \Pi}}<n_x$ for some $\lambda\geq0$ where $\lambda \in \Lambda(\m{A})$, i.e., the pair $(\m A,\m B \m \Pi)$ is unstabilizable. Since we have a contradiction, then we can conclude that the pair $(\m A,\m B \bar{\m \Pi})$ is stabilizable. 
The second part of the proposition is logically equivalent to the first part, and thus omitted here.
\end{proof}
\fi 

\iftrue
\begin{myprs}\label{prs:RedundantObservability}
Let  $\mathcal{S}$ be an arbitrary combination of SaA. If system \eqref{equ:StateSpaceSaA-all} is detectable for $\mathcal{S}$, then activating one or more sensors from $\mathcal{S}$ will keep system \eqref{equ:StateSpaceSaA-all} detectable. Similarly, if system \eqref{equ:StateSpaceSaA-all} is undetectable for $\mathcal{S}$, then deactivating one or more sensors from $\mathcal{S}$ will keep system \eqref{equ:StateSpaceSaA-all} undetectable.
\end{myprs}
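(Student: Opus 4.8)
The plan is to mirror the argument used for Proposition~\ref{prs:RedundantControllability}, but dualized from the stabilizability (left-eigenvector) setting to the detectability (right-eigenvector) setting. The key structural observation is that activating an actuator appends a \emph{column} to the input matrix, whereas activating a sensor appends a \emph{row} to the output matrix; consequently the dual Popov-Belevitch-Hautus (PBH) test, phrased in terms of right eigenvectors and $\m C \m w$, is the correct tool here, just as the left-eigenvector form $\m v^{\top}\m B$ was used for the actuator case.

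First I would invoke the PBH criterion for detectability \cite{Hautus1970}: letting $\m C_1 \in \mathbb{R}^{m_1\times n_x}$ denote the nonzero rows of $\m\Gamma\m C$ corresponding to the sensors activated in $\mathcal{S}$, the pair $(\m A,\m C_1)$ is detectable if and only if $\m C_1 \m w \neq \m{0}$ for every right eigenvector $\m w \in \mathbb{R}^{n_x}$, $\m w \neq \m{0}$, satisfying $(\m A-\lambda\m I)\m w = \m{0}$ for some $\lambda\in\Lambda(\m A)$ with $\mathrm{Re}(\lambda)\geq 0$. Next, since activating additional sensors appends rows, I would write $\m C_2 \in \mathbb{R}^{m_2\times n_x}$ for the rows contributed by the newly activated sensors and form the augmented output matrix $\m C = \sbmat{\m C_1 \\ \m C_2}$. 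For any such critical/unstable right eigenvector $\m w$, a direct computation gives $\m C \m w = \sbmat{\m C_1 \m w \\ \m C_2 \m w}$, and because detectability of $(\m A,\m C_1)$ guarantees $\m C_1 \m w \neq \m{0}$, the stacked vector is automatically nonzero. Hence $(\m A,\m C)$ passes the dual PBH test for every such $\lambda$ and remains detectable, establishing the first statement.

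Finally, the second statement---that deactivating one or more sensors from an undetectable configuration preserves undetectability---is simply the logical contrapositive of the first, so no separate argument is required, exactly as in the proof of Proposition~\ref{prs:RedundantControllability}. I do not anticipate any conceptual obstacle: the main care needed is purely in the bookkeeping, namely correctly stating the \emph{dual} of the Hautus condition (right eigenvectors and $\m C \m w$ rather than left eigenvectors and $\m v^{\top}\m B$) and correctly identifying sensor activation with row augmentation of $\m C$ rather than column augmentation as for actuators. Once the block form of $\m C \m w$ is in place, the conclusion is immediate from the nonvanishing of the $\m C_1 \m w$ sub-block.
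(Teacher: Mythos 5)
Your proof is correct and is exactly what the paper intends: the paper omits the proof of this proposition, stating only that it is the ``detectability equivalence'' of Proposition~\ref{prs:RedundantControllability}, and your argument is precisely that dualization --- replacing the left-eigenvector condition $\m v^{\top}\m B \neq \m 0$ with the right-eigenvector condition $\m C\m w \neq \m 0$ and column augmentation with row augmentation, then obtaining the second claim by contraposition. (A negligible bookkeeping point, shared with the paper's own proof of Proposition~\ref{prs:RedundantControllability}: the newly activated sensors' rows need not literally be appended at the bottom, but a row permutation does not affect whether $\m C\m w = \m 0$, so the block form is harmless.)
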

\else
\begin{myprs}\label{prs:RedundantObservability}
Let $\mathcal{S},\bar{\mathcal{S}},\hat{\mathcal{S}}\in\mathbfcal{S}$. If the pair $(\m A,\m\Gamma\m C)$ is detectable with $\m \Gamma=\mathcal{G}_{\gamma}(\mathcal{S})$, then for any $\bar{\m \Gamma}=\mathcal{G}_{\gamma}(\bar{\mathcal{S}})$ where $\mathcal{S}^a_{\gamma}\subset\bar{\mathcal{S}}^a_{\gamma}$, the pair $(\m A,\bar{\m\Gamma}\m C)$ is detectable. Similarly, if the pair $(\m A,\m\Gamma\m C)$ is undetectable with $\m \Gamma=\mathcal{G}_{\gamma}(\mathcal{S})$, then for any $\hat{\m \Gamma}=\mathcal{G}_{\gamma}(\hat{\mathcal{S}})$ where $\hat{\mathcal{S}}^a_{\gamma}\subset\mathcal{S}^a_{\gamma}$, the pair $(\m A,\hat{\m\Gamma}\m C)$ is undetectable.
\end{myprs}
\fi

Proposition \ref{prs:RedundantObservability} is the detectability equivalence of Propositions \ref{prs:RedundantControllability} and thus the proof is omitted for brevity. These two propositions allow discarding some combinations of SaA that are either unstabilizable and/or undetectable, or stabilizable and detectable but have more active SaA. Since stabilizability and detectability tests provide no guarantee of static output stabilizability for system \eqref{equ:StateSpace-all}, we save all combinations of SaA that pass the tests according to the routine in Algorithm~\ref{alg:DaC}. This allows the now-modified algorithm to consider the remaining combinations of SaA that contain more active SaA in the case when the best combination that passes the tests cannot give a stabilizing feedback gain. After all combinations of SaA that pass the tests have been stored, we solve \eqref{eq:OFS} starting from the combination with least number of SaA. If a feasible solution exists given this least-cost combination, the modified algorithm terminates. Otherwise, a stored combination having more active SaA is tested until \eqref{eq:OFS} is successfully solved.

The modified algorithm offers flexibility in assessing stabilizability/detectability of dynamic networks, although it no longer yields an optimal solution of~\eqref{eq:OFS-SaAProblem} as the stabilizability/detectability tests are not enough to guarantee the existence of stabilizing, static output feedback control gain. Specifically, either the PBH or the eigenvector tests \cite{Hautus1970} can be used. If the pairs $(\mA,\mB\m\Pi)$ and/or $(\mA,\m\Gamma\mC)$ have large condition number, then the eigenvector test is preferable to be used since MATLAB's rank function tends to return unreasonable results for pairs with large condition number \cite{olshevsky2014minimal}.

\begin{table*}[t]
%	  \vspace{-0.4cm}
\normalsize
	\centering
	\caption{ Numerical test results for the three scenarios/methods.}
%	\vspace{-0.1cm}
	\begin{tabular}{|c|c|c|c|c|c|}
	\hline
		Scenario & {$\mathrm{Max}(\mathrm{Re}(\Lambda(\m A + \m B\m \Pi^*\mF \m \Gamma^*\m C)))$} & {$\sum_{k = 1}^{N} \pi_k+\gamma_k$} & {$\Delta t(s)$} & {$\mathrm{Iterations}$} & $\m{\gamma}^*$ and $\m{\pi}^*$ \\
		\hline \hline 
		\multirow{2}[1]{*}{MI-SDP} & \multirow{2}[1]{*}{-3.44$\times 10^{-3}$} & \multirow{2}[1]{*}{4} & \multirow{2}[1]{*}{14.13} & \multirow{2}[1]{*}{---} & $\m{\gamma}^* = \{0 ,    0   ,  1 ,    0 ,    1  ,   0  ,   0  ,   0   ,  0   ,  0\}$ \\
		&       &       &       &       & $\m{\pi}^* = \{1 ,    0  ,   0 ,    0   ,  1  ,   0  ,  0 ,    0  ,   0  ,   0\}$ \\
		\hline
		\multirow{2}[0]{*}{BSA-SDP} & \multirow{2}[0]{*}{-2.92$\times 10^{-3}$} & \multirow{2}[0]{*}{4} & \multirow{2}[0]{*}{6.77} & \multirow{2}[0]{*}{11} & $\m{\gamma}^* = \{0     ,0   ,  1    , 0   ,  0   ,  0   ,  0  ,   0  ,   1  ,   0\}$ \\
		&       &       &       &       & $\m{\pi}^* = \{0     ,0  ,   0  ,   1   ,  0   ,  0   ,  0   ,  0   ,  1  ,   0\}$ \\          \hline
		\multirow{2}[0]{*}{BSA-PBH} & \multirow{2}[0]{*}{-1.41$\times 10^{-2}$} & \multirow{2}[0]{*}{4} & \multirow{2}[0]{*}{2.68} & \multirow{2}[0]{*}{6} & $\m{\gamma}^* = \{0     ,0  ,   1   ,  0   ,  0   ,  0   ,  0  ,   0  ,   1  ,   0\}$ \\
		&       &       &       &       & $\m{\pi}^* = \{0     ,0   ,  1  ,   0  ,   0  ,   0   ,  0   ,  0  ,   1  ,   0\}$ \\
		\hline
	\end{tabular}%
	\label{tab:table1}%
%	\vspace{-0.4cm}
\end{table*}%

%\vspace{-0.1cm}
\section{Numerical Experiments}~\label{sec:results}
We test the developed methods on a mass spring system~\cite{linfarjovTAC13admm,MihailoSiteMassSring} that consists of $N=10$ subsystems with $\mC=\mI$. All the simulations are performed using \iffalse MATLAB R2016b running on a 64-bit Windows 10 with 3.4GHz Intel Core i7-6700 CPU and 16 GB of RAM \else MATLAB R2017b running on a 64-bit Windows 10 with 2.5GHz Intel Core i7-6500U CPU and 8 GB of RAM\fi, where each optimization problem is solved using YALMIP \cite{lofberg2004yalmip} with MOSEK version 8.1~\cite{mosekAps}. Here, we impose a logistic constraint so that there are at least 2 activated sensors and 2 activated actuators. In this simulation, we consider three different scenarios that follow from the developed approaches in the previous sections:

\begin{itemize}
	\item The first scenario (MI-SDP) is carried out by solving problem \eqref{eq:Big-M-SaAProblem} via YALMIP's MI-SDP branch and bound~\cite{lofberg2004yalmip}. We choose $L_1 = 10^5$, $L_2 = 10^5$, $L_3 = 10^5$, $\epsilon_1 = 10^{-9}$, and $\epsilon_2 = 10^{-6}$ such that the left-hand side of \eqref{eq:Big-M-SaAProblem-2} is upper bounded by $-\epsilon_1\mI$ and $\m P \succeq \epsilon_2\mI$. Smaller values for $L_1$ and $L_2$ resulted in large computational time. 
	\item The second scenario (BSA-SDP) directly follows Algorithm \ref{alg:DaC} and solves \eqref{eq:OFS} in each iteration to check the feasibility of the given combination of SaA, while also computing the static output feedback gain matrix simultaneously from the solution of LMIs~\eqref{eq:OFS}.
	\item The third scenario (BSA-PBH) uses the modified version of Algorithm~\ref{alg:DaC}, as explained in section \ref{BSA-2}, along with the PBH tests. When the algorithm terminates, the obtained SaA solutions are tested to solve \eqref{eq:OFS}. The combination that is feasible for \eqref{eq:OFS} and has the least number of active SaA is then reported as the solution.
	
\end{itemize} 

%\vspace{-0.0cm}
\begin{figure}[t]
%\vspace{-0.2cm}
	\centering	\includegraphics[scale=0.531]{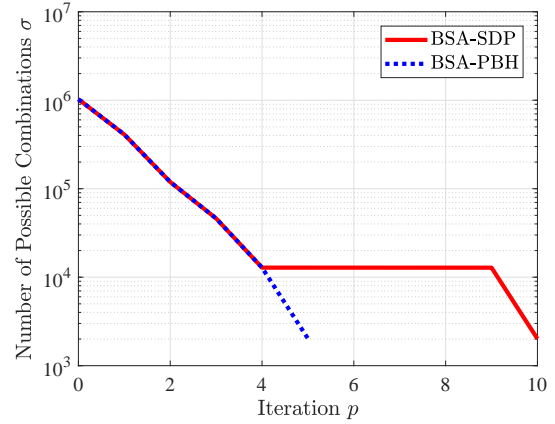}
		\caption{The reduction of the number of possible combinations. The algorithm terminates when $\sigma = 0$.}
		\label{fig1}
%	\vspace{-0.1cm}
\end{figure}

The results of this numerical tests are presented in Table \ref{tab:table1}. \iffalse\footnote[2]{MOSEK version 8.1 is used in this experiments since it takes much shorter computational time to find an optimal solution for the first scenario, compared to the older versions.}\fi All scenarios successfully return optimal solutions: 2 sensors and 2 actuators. Among these scenarios, the MI-SDP takes the longest time to compute an optimal solution. The BSA-PBH outperforms the other two scenarios in terms of computational time, while also taking fewer iterations compared to the BSA-SDP. This occurs because in BSA-SDP, problem \eqref{eq:OFS} is solved in each iteration, whereas BSA-PBH only checks the stabilizability and detectability of a given combination of SaA, a process that does not require much computations compared to solving SDPs. The reduction of the number of possible combinations of SaA between BSA-SDP and BSA-PBH is depicted in Figure \ref{fig1}. Note that the algorithm terminates when the candidate set is empty.
 
%\vspace{-0.2cm}
\section{Summary and Future Work}
Two general approaches to minimize the number of selected SaA for static output feedback stabilization are proposed. The first approach is based on solving a MI-SDP, while the second one uses a simple algorithm based on the binary search algorithm. The numerical tests on a mass spring system show that both approaches are able to give optimal solutions for the SaA selection problem.

Our future work will focus on investigating the scaling of the proposed methods into larger dynamic networks. Solving the MI-SDP of problem \eqref{eq:Big-M-SaAProblem} might consume large computational resources for larger systems. Also, a limitation of Algorithm~\ref{alg:DaC} is that it requires traversing the database of all possible SaA combinations. To that end, we plan to develop heuristics so that SaA selection problem can be applied for larger dynamic networks.

%Our future work will focus on investigating the scaling of the proposed methods into larger dynamic networks. A limitation of Algorithm~\ref{alg:DaC} is that it requires traversing the database of all possible SaA combinations. To that end, we plan to develop heuristics so that SaA selection problem can be applied for larger dynamic networks. 

%\vspace{-0.2cm}
 
\bibliographystyle{IEEEtran}	\bibliography{bibliography}

\end{document}